\lstdefinelanguage{maple}
	{morekeywords={true, false, try, catch, return, break, error, 
	               module, export, local, option, in, use,
                 and, or, not, xor, xnor,
                 if, then, elif, else, fi,
                 while, for, from, by, to, do, od,
                 proc, nargs, local, global, end, NULL}}
\definecolor{MapleRed}{rgb}{1,0,0}
\definecolor{MapleBlue}{rgb}{0,0,1}
\definecolor{MaplePink}{rgb}{1,0,1}
\renewcommand{\vec}{\boldsymbol}
\theoremstyle{definition}
\newcommand{\brac}[1]{\left( #1 \right)}
\newcommand{\cbrac}[1]{\left\{ #1 \right\}}
\newcommand{\sbrac}[1]{\left[ #1 \right]}
\newcommand{\x}{\boldsymbol{x}}
\newcommand{\xx}{\boldsymbol{x}}
\newcommand{\ZZ}{\mathbb{Z}}
\newcommand{\floor}[1]{\left\lfloor #1 \right\rfloor}
\theoremstyle{theorem}
\newtheorem{theorem}{Theorem}
\theoremstyle{definition}
\newtheorem*{definition}{Definition}
\newtheorem*{remark}{Remark}
\newtheorem*{example}{Example}
\newcommand{\RR}{\mathcal{R}}
\newcommand{\DFT}{\text{DFT}}
\newcommand{\TFT}{\text{TFT}}
\newcommand{\w}{\omega}
\renewcommand{\a}{\mathbf{a}}
\newcommand{\xs}[1]{\mathbf{x_{#1}}}
\title{An Illustrated Introduction to the Truncated Fourier Transform}
\author{Paul Vrbik. \\
School of Mathematical and Physical Sciences \\
The University of Newcastle\\
Callaghan, Australia\\
\tt{paulvrbik@gmail.com}}
\date{\today}
\begin{document}
\maketitle

\begin{abstract}
The Truncated Fourier Transform (\textsc{tft}) is a variation of the Discrete Fourier Transform (\textsc{dft}/\textsc{fft}) that allows for input vectors that do \emph{not} have length $2^n$ for $n$ a positive integer. We present the univariate version of the \textsc{tft}, originally due to Joris van der Hoeven, heavily illustrating the presentation in order to make these methods accessible to a broader audience.
\end{abstract}


\section{Introduction}

In 1965 Cooley and Tukey developed the \emph{Discrete Fourier Transform} (\textsc{dft}) to recover continuous functions from discrete samples. This was a landmark discovery because it allowed for the digital manipulation of analogue signals (like sound) by computers. Soon after a variant called the \emph{Fast Fourier Transform} (\textsc{fft}) eclipsed the \textsc{dft} to the extent that \textsc{fft} is often mistakenly substituted for \textsc{dft}. As its name implies, the \textsc{fft} is a method for computing the \textsc{dft} faster.



\textsc{fft}s have an interesting application in Computer Algebra. Let $\RR$ be a ring with $2 \in \RR$ a unit. If $\RR$ has a primitive $n$th root of unity $\w$ with $n=2^p$ (i.e. $\w^{n/2}=-1$) then the \textsc{fft} computes the product of two polynomials $P,Q \in \RR[x]$ with $\deg(PQ)<n$ in $O(n\log n)$ operations in $\RR$. Unfortunately, when $\deg(PQ)$ is sufficiently far from a power of two \emph{many} computations wasted. This deficiency was addressed by the signal processing community using a method called \textsc{fft}-pruning \cite{SigProcessing}. However, the difficult inversion of this method is due to van der Hoeven \cite{TFT2}\cite{TFT1}.\smallskip

In Section \ref{Section::DFT} we outline the \textsc{dft}, including a method for its non-recursive implementation. In Section \ref{Section::TFT} we develop the ``pruned'' variant, called Truncated Fourier Transform (\textsc{tft}). Finally, in Section \ref{Section::InvTFT}, we show how the \textsc{tft} can be inverted and outline the algorithm for doing so in Section \ref{Section::AlgTFT}.

\section{The Discrete Fourier Transform}\label{Section::DFT}
For this paper let $\RR$ be a ring with $2 \in \RR$ a unit and $\w \in \RR$ an $n$th root of unity. The Discrete Fourier Transform\footnote{In signal processing community this is called the ``decimation-in-time'' variant of the \textsc{fft}.}, with respect to $\w$, of vector 
$\a=(a_0, \ldots, a_{n-1}) \in \RR^n$
is the vector $\hat \a = (\hat a_0, \ldots, \hat a_{n-1}) \in \RR^n$ with $$ \hat a_i = \sum_{j=0}^{n-1} a_j \w^{ij}.$$

Alternatively we can view these $n$-tuples as encoding the coefficients of polynomials from $\RR[x]$ and define the \textsc{dft} with respect to $\w$ as the mapping 
\begin{align*}
\DFT_{\w}:  \RR[x] &\to \RR^n \\
A(x) = a_0 + \cdots + a_{n-1} x^{n-1} &\mapsto (A(\w^0), \ldots, A(\w^{n-1})).
\end{align*}
We let the relationship between $A$ and its coefficients be implicit and write 
$$\DFT_{\w}\brac{a_0,\ldots,a_{n-1}} := (A(\w^0), \ldots, A(\w^{n-1}))$$
when $A = a_0 + \cdots + a_{n-1} x^{n-1}$.

The \textsc{dft} can be computed efficiently using binary splitting. This method requires evaluation only at $\w^{2^i}$ for $i \in \{0,\ldots,p-1\}$, rather than at all $\w^0, \ldots, \w^{n-1}$. To compute the \textsc{dft} of $A$ with respect to $\w$ we write
$$(b_0,c_0,\ldots,b_{n/2-1},c_{n/2-1}) := \brac{a_0, \ldots, a_{n-1}}$$
and recursively compute the \textsc{dft} of $(b_0,\ldots,b_{n/2-1})$ and $(c_0,\ldots,c_{n/2-1})$ with respect to $\w^2$:
\begin{align*}
(\hat b_0 , \ldots, \hat b_{n/2-1}) := \DFT_{\w^2}(b_0,\ldots,b_{n/2-1}) , \\
(\hat c_0 , \ldots, \hat c_{n/2-1}) := \DFT_{\w^2}(c_0,\ldots,c_{n/2-1}) .
\end{align*}
Finally, we construct $\hat \a$ according to
\begin{align*}
\DFT_\omega (a_0, \ldots, a_{n-1}) = (& \hat b_0 + \hat c_0 , \ldots, \hat b_{n/2-1} + \hat c_{n/2-1}\omega^{n/2-1}, \\
& \qquad \hat b_0 - \hat c_0 , \ldots, \hat b_{n/2-1} - \hat c_{n/2-1}\omega^{n/2-1} ).
\end{align*}

This description has a natural implementation as a recursive algorithm, but in practice it is often more efficient to implement an in-place algorithm that eliminates the overhead of creating recursive stacks.

\begin{definition}
Let $i$ and $p$ be a positive integers and let $i = i_0 2^0 + \cdots + i_p 2^p$ for $i_0,\ldots,i_p \in \cbrac{0,1}$. The length-$p$ bitwise reverse of $i$ is given by
$$ [i]_p := i_p 2^0 + \cdots + i_{0} 2^p.$$
\end{definition}

\begin{example} $[3]_5 = 24$ and $[11]_5 = 26$ because
\begin{align*}
[3]_5 &= [1\cdot 2^0 + 1\cdot 2^1 + 0 \cdot 2^2 + 0 \cdot 2^3 + 0 \cdot 2^4]_5 \\
      &= 0 \cdot 2^0 + 0\cdot 2^1 + 0 \cdot 2^2 + 1 \cdot 2^3 + 1 \cdot 2^4\\
      &= 24
\end{align*}
and
\begin{align*}
[11]_5 &= [1\cdot 2^0 + 1\cdot 2^1 + 0 \cdot 2^2 + 1 \cdot 2^3 + 0 \cdot 2^4]_5 \\
      &= 0 \cdot 2^0 + 1\cdot 2^1 + 0 \cdot 2^2 + 1 \cdot 2^3 + 1 \cdot 2^4\\
      &= 26.
\end{align*} 
Notice if we were to write 3, 24, 11, and 26 as a binary numbers to five digits we have 00011 reverses to 11000 and 01011 reverses to 11010 --- in fact this is the inspiration for the name ``bitwise reverse.''
\end{example}

For the in-place non-recursive \textsc{dft} algorithm, we require \emph{only one vector} of length $n$. Initially, at step zero, this vector is
$$\xs{0} = (x_{0,0}, \ldots, x_{0,n-1}) := (a_0, \ldots, a_{n-1})$$
and is updated (incrementally) at steps $s \in \{ 1 , \ldots, p \}$ by the rule
\begin{equation}
\left[\begin{array}{c}x_{s,im_s+j} \\x_{s,(i+1)m_s+j}\end{array}\right]:=\left[\begin{array}{cr}1 & \omega^{[i]_sm_s} \\1 & -\omega^{[i]_sm_s}\end{array}\right]\left[\begin{array}{c}x_{s-1,im_s+j} \\x_{s-1,(i+1)m_s+j}\end{array}\right]
\label{EqButterfly}
\end{equation}
where $m_s = 2^{p-s}$ and for all $i \in \{0, 2, \ldots, n/m_s-2\}$,  $j \in \{0,\ldots,m_s-1\}$. Note that two additions and \emph{one} multiplication are done in (\ref{EqButterfly}) as one product is merely the negation of the other.\medskip

We illustrate the dependencies of the $x_{s,i}$ values in (\ref{EqButterfly}) with 
\begin{center}
   \includegraphics[width=0.65\textwidth]{tst.35}
\end{center}
We call this a ``butterfly'' after the shape it forms and may say $m_s$ controls the width --- the value of which decreases as $s$ increases. By placing these butterflies on a $s \times n$ grid  (Figure \ref{Butterfly}) we can see which values of $\xx_s$ are required to compute particular entires of $\xx_{s+1}$ (and vice-versa). For example
\begin{center}
   \includegraphics[width=0.4\textwidth]{tst.36}\;\;\;\;\;\;\;\;
\end{center}
denotes that $x_{3,\,9}$ and $x_{3,\,11}$ are required to determine $x_{2,\,9}$ and $x_{2,\,11}$ (and vice-versa).\medskip

\begin{figure}[htbp]
\begin{center}
\includegraphics[width=\textwidth]{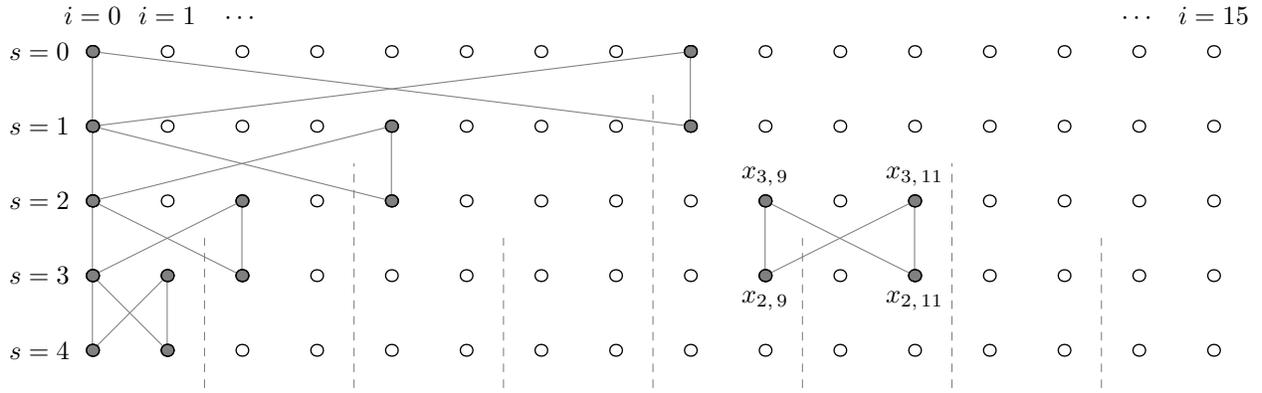}
\caption{Schematic representation of Equation (\ref{EqButterfly}) using butterflies to illustrate the value dependencies at various steps $s$. The grid has rows $s=0,\ldots,4$ and columns $n=0,\ldots,15$ for illustrating the $x_{s,i}$ values.}
\label{Butterfly}
\end{center}
\end{figure}

Using induction over $s$,
$$x_{s,im_s+j} = \brac{ \DFT_{\w^{m_s}}(a_j, a_{m_s+j}, \ldots, a_{n-m_s+j}) }_{[i]_s},$$
for all $i \in \{ 0 , \ldots, n/m_s -1 \}$ and $j \in \{0, \ldots, m_s -1 \}$ \cite{TFT2}. In particular, when $s=p$ and $j=0$, we have
$$
x_{p,i} = \hat a_{[i]_p} \;\text{ and }\; \hat a_i = x_{p,[i]_p}
$$
for all $i \in \{ 0, \ldots, n-1 \}$. That is, $\hat \a$ is a (specific) permutation of $\xs{p}$ as illustrated in Figure \ref{FFT1}.

The key property of the \textsc{dft} is that it is straightforward to invert, that is to recover $\mathbf{a}$ from $\hat{\mathbf{a}}$:
\begin{equation} \label{invFFT}
\DFT_{\omega^{-1}}( \hat{\mathbf{a}})_i = \DFT_{\omega^{-1}}(\DFT_\omega{ (\mathbf{a}}))_i = \sum_{k=0}^{n-1} \sum_{j=0}^{n-1} a_i \omega^{(i-k)j} = n a_i
\end{equation}
since $\sum_{j=0}^{n-1} \omega^{(i-k)j} = 0$ whenever $i \neq k$. This yields a polynomial multiplication algorithm that does $O(n \log n)$ operations in $\RR$ (see \cite[\S4.7]{AFCA} for the outline of this algorithm).

\begin{figure}[htbp]
\begin{center}
\includegraphics[width=\textwidth]{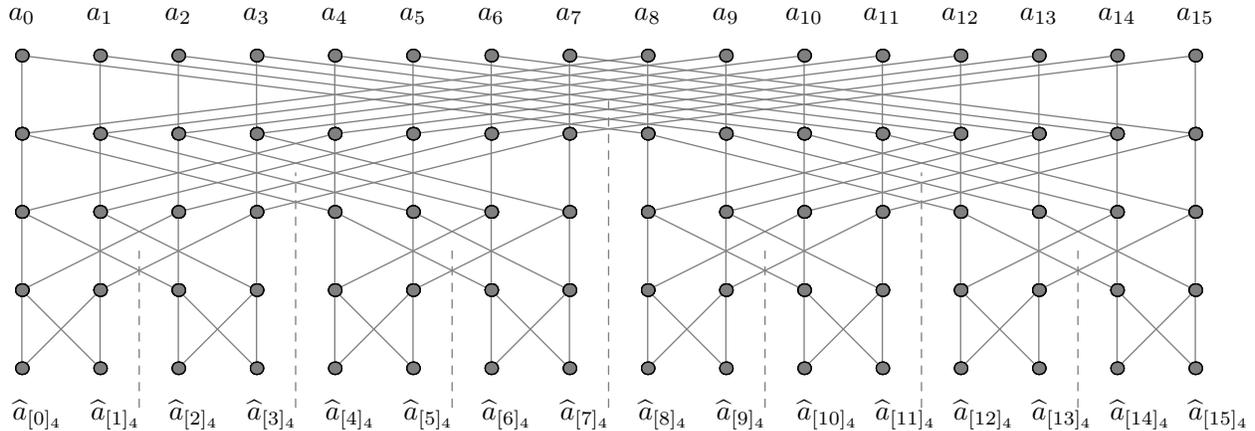}
\caption{The Discrete Fourier Transform for $n=16$. The top row, corresponding to $s=0$, are the initial values $\a$. The bottom row, corresponding to $s=4$, is a permutation of $\hat \a$ (the result of the DFT on $\a$).}
\label{FFT1}
\end{center}
\end{figure}

\FloatBarrier

\section{The Truncated Fourier Transform}\label{Section::TFT}
The motivation behind the Truncated Fourier Transform (\textsc{tft}) is the observation that many computations are wasted when the length of $\a$ (the input) is not a power of two.\footnote{The \textsc{tft} is exactly equivalent to a technique called ``\textsc{fft} pruning'' in the signal processing literature \cite{SigProcessing}.} This is entirely the fault of the strategy where one ``completes'' the $\ell$-tuple $\a = (a_0,\ldots,a_{\ell-1})$ by setting $a_{i} = 0$ when $i \geq \ell$ to artificially extend the length of $\a$ to the nearest power of two (so the \textsc{dft} can be executed as usual).

However, despite the fact that we may only want $\ell$ components of $\hat \a$, the \textsc{dft} will calculate \emph{all} of them. Thus computation is wasted. We illustrate this in Figures \ref{OverFFT} and \ref{TFT0}. This type of wasted computation is relevant when using the \textsc{dft} to multiply polynomials --- their products are rarely of degree one less some power of two.

The definition of the \textsc{tft} is similar to that of the \textsc{dft} with the exception that the input and output vector ($\a$ resp.\ $\hat \a$) are not necessarily of length some power of two.  More precisely the \textsc{tft} of an $\ell$-tuple $(a_0,\ldots,a_{\ell-1}) \in \RR^\ell$ is the $\ell$-tuple $$\brac{ A( \w^{[0]_p}), \ldots, A(\w^{[\ell-1]_p}) } \in \RR^\ell.$$
where $n=2^p$, $\ell < n$ (usually $\ell \geq n/2$) and $\omega$ a $n$th root of unity.

\begin{figure}[htbp]
\begin{center}
\includegraphics[width=\textwidth]{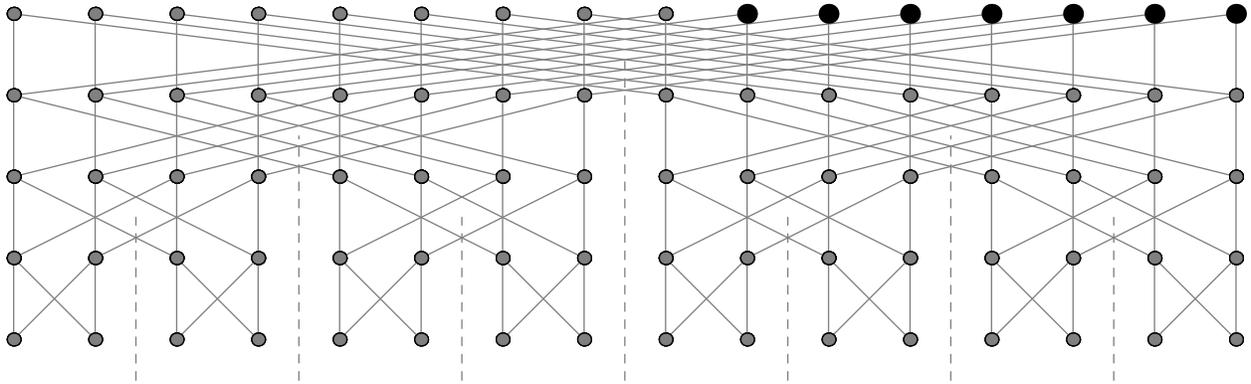}
\caption{The \textsc{dft} with ``artificial'' zero points (large black dots).}
\label{OverFFT}
\end{center}
\end{figure}

\begin{figure}[htbp]
\begin{center}
\includegraphics[width=\textwidth]{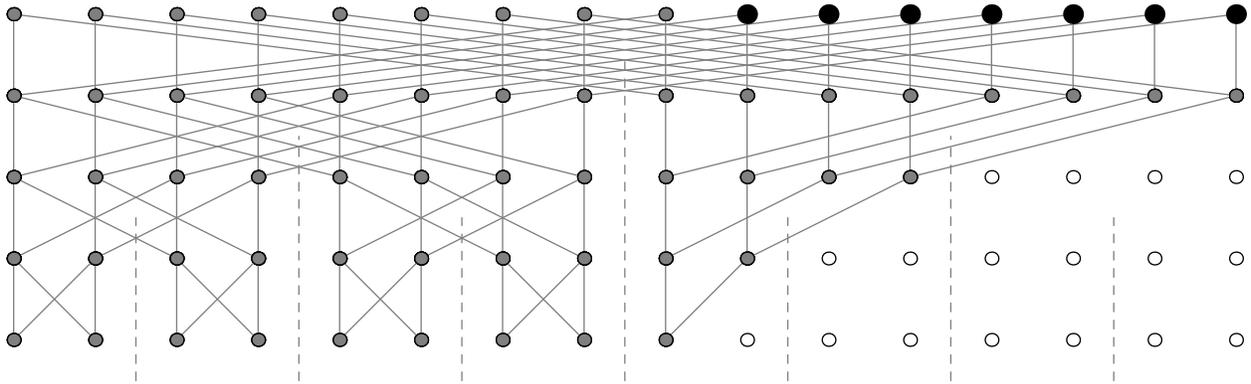}
\caption{Removing all unnecessary computations from Figure \ref{OverFFT} gives the schematic representation of the \textsc{tft}.}
\label{TFT0}
\end{center}
\end{figure}

\begin{remark}
van der Hoeven \cite{TFT2} gives a more general description of the \textsc{tft} where one can choose an initial vector $(x_{0,i_0}, \ldots, x_{0,i_n})$ and target vector $(x_{p,j_0}, \ldots, x_{p,j_n})$. Provided the $i_k$'s are distinct one can carry out the \textsc{tft} by considering the full \textsc{dft} and removing all computations not required for the desired output. In this paper, we restrict our discussion to that of the scenario in Figure \ref{TFT0} (where the input and output are the same initial segments) because it can be used for polynomial multiplication, and because it yields the most improvement. $\diamond$
\end{remark}

If we only allow ourselves to operate in a size $n$ vector it is straightforward to modify the in-place \textsc{dft} algorithm from the previous section to execute the \textsc{tft}. (It should be emphasized that this only saves computation and not space. For a ``true'' in-place \textsc{tft} algorithm that operates in an array of size $\ell$, see Harvey and Roche's \cite{InPlaceTFT}.) At stage $s$ it suffices to compute 
$(x_{s,0}, \ldots, x_{s,j})$ with $j=\lceil \ell/m_s \rceil m_s-1$ where $m_s = 2^{p-s}$.\footnote{This is a correction to the bound given in \cite{TFT1} as pointed out in \cite{TFT2}.}

\begin{theorem}
Let $n=2^p$, $1 \leq \ell < n$ and $\w \in \RR$ be a primitive $n$th root of unity in $\RR$. The \textsc{tft} of an $\ell$-tuple $(a_0, \ldots, a_{\ell-1})$ with respect to $\w$ can be computed using at most $\ell p + n$ additions and $\lfloor (\ell p + n)/2 \rfloor$ multiplications of powers of $\w$.
\end{theorem}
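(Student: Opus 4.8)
Here the plan is to count directly the scalar butterflies (\ref{EqButterfly}) executed at each stage $s=1,\dots,p$ of the pruned algorithm. Recall from the paragraph before the theorem that at stage $s$ one computes exactly $x_{s,0},\dots,x_{s,N_s-1}$ with $N_s:=\lceil\ell/m_s\rceil m_s$ and $m_s=2^{p-s}$; put also $N_0:=n$. First I would pin down how many butterflies stage $s$ uses. Fixing an even $i$ and letting $j$ range, one instance of (\ref{EqButterfly}) produces precisely the two consecutive length-$m_s$ blocks of indices $im_s,\dots,(i+2)m_s-1$, so obtaining $x_{s,0},\dots,x_{s,N_s-1}$ means running it for all $j\in\{0,\dots,m_s-1\}$ and all even $i\le\lceil\ell/m_s\rceil-1$: that is $\lceil\lceil\ell/m_s\rceil/2\rceil\,m_s$ butterflies. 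By the identity $\lceil\lceil a/b\rceil/c\rceil=\lceil a/(bc)\rceil$ and $m_{s-1}=2m_s$ this equals $\lceil\ell/m_{s-1}\rceil m_s=\tfrac12 N_{s-1}$; one also checks the required stage-$(s-1)$ entries are already available, which the definition of $N_{s-1}$ arranges exactly.

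Second, I would exploit the zero padding, which is the whole point of the \textsc{tft}. We may assume $\ell\ge n/2$: if $\ell<n/2$ then stage $1$ performs only copies and the remainder coincides with the \textsc{tft} of the same $\ell$ for $n/2$ (with $\w^2$ for $\w$), so that case follows by induction with the smaller bound $\ell(p-1)+n/2<\ell p+n$. When $\ell\ge n/2$, at stage $1$ the twiddle is $\omega^{[0]_1m_1}=\omega^0=1$ and any butterfly whose second input $a_{n/2+j}$ is a padding zero --- i.e.\ $j\ge\ell-n/2$ --- does no arithmetic at all, and there are at least $n-\ell$ of these. At every later stage no input vanishes identically: by the identity $x_{s,im_s+j}=\big(\DFT_{\w^{m_s}}(a_j,\dots,a_{n-m_s+j})\big)_{[i]_s}$, an entry $x_{s-1,k}$ is a \textsc{dft} coefficient of a subvector of $\a$ whose leading term is $a_{j}$ with $j<m_{s-1}\le n/2\le\ell$, hence a nonzero polynomial in $a_0,\dots,a_{\ell-1}$. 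Finally, at stage $s$ the $m_s$ butterflies with $i=0$ multiply by $\omega^0=1$ (no multiplication), while for even $i\ge2$ one has $[i]_sm_s\not\equiv0\pmod n$, so a single multiplication is used.

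Assembling the count, the number of additions is at most
$$
2\!\left(\tfrac12N_0-(n-\ell)\right)\;+\;\sum_{s=2}^{p}2\cdot\tfrac12N_{s-1}\;=\;2\ell-n+\sum_{t=1}^{p-1}N_t ,
$$
and the number of multiplications at most $\sum_{s=1}^{p}\big(\tfrac12N_{s-1}-m_s\big)=\tfrac12\big(\sum_{t=1}^{p-1}N_t\big)-\tfrac n2+1$. Writing $S:=\sum_{t=1}^{p-1}N_t$, the bounds $N_t=\lceil\ell/m_t\rceil m_t\le\ell+m_t-1$ and $\sum_{t=1}^{p-1}m_t=2^{p-1}+\cdots+2=n-2$ give $S\le(p-1)\ell+n-p-1$; hence the additions number at most $(p+1)\ell-p-1<\ell p+n$ (using $\ell<n$), and the multiplications at most $\tfrac12\big((p-1)\ell+n-p-1\big)-\tfrac n2+1=\tfrac12(p-1)(\ell-1)\le\tfrac12(\ell p+n)$, which since the count is an integer means at most $\lfloor(\ell p+n)/2\rfloor$.

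I expect the combinatorics of the first two steps to be the real work: identifying exactly which butterflies run at each stage (in particular the extra length-$m_s$ block forced when $\lceil\ell/m_s\rceil$ is odd) and establishing that the only zeros ever fed into a butterfly are the padding zeros entering at stage $1$ when $\ell\ge n/2$ --- indeed the naive count $\sum_{t=0}^{p-1}N_t$ of additions already exceeds $\ell p+n$ in small examples, so some such saving is essential. The arithmetic that remains is routine and leaves considerable slack, so the small-$\ell$ reduction and any ceiling rounding could instead be absorbed into that slack rather than treated exactly.
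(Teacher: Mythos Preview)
Your argument is correct, but it takes a genuinely different and more elaborate route than the paper's. The paper simply counts the total number of \emph{values} $x_{s,i}$ that the pruned algorithm produces: at stage $s$ one needs exactly $N_s=\lceil\ell/m_s\rceil m_s\le\ell+m_s$ of them, so $\sum_{s=1}^{p}N_s\le p\ell+\sum_{s=1}^{p}m_s=p\ell+n-1$; since each value costs one addition and each pair of values shares one multiplication, the stated bounds drop out in one line. You instead count \emph{full butterflies}, which at stage $s$ produce $N_{s-1}$ values (an extra block when $\lceil\ell/m_s\rceil$ is odd), so your raw addition total is $\sum_{t=0}^{p-1}N_t$, shifted by one index relative to the paper's sum. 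As you correctly observe, this shifted sum can exceed $\ell p+n$, and you recover the bound only after (i) reducing to $\ell\ge n/2$ by induction on $p$, (ii) subtracting the stage-$1$ butterflies whose second input is a padding zero, and (iii) for the multiplication count, subtracting the $i=0$ butterflies whose twiddle is $\omega^0$. What your extra work buys is sharper explicit constants---at most $(p+1)\ell-p-1$ additions and $\tfrac12(p-1)(\ell-1)$ multiplications, both strictly below the paper's $p\ell+n-1$ and $\lfloor(p\ell+n)/2\rfloor$---whereas the paper's per-value count sidesteps the induction, the zero-padding analysis, and the ``no later zero inputs'' lemma entirely.
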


\begin{proof}
Let $j=(\lceil \ell_s/m_s \rceil)m_s-1$; at stage $s$ we compute $(x_{s,0}, \ldots, x_{s,j})$. So, in addition to $x_{s,0}, \ldots, x_{s,\ell-1}$ we compute $$(\lceil \ell/m_s \rceil)m_s -1 - \ell \leq m_s$$ more values. Therefore, in total, we compute at most 
\begin{align*}
p\ell + \sum_{s=1}^{p} {m_s} 
&= p \ell + 2^{p-1} + 2^{p-2} + \cdots + 1 = p\ell + 2^p - 1 < p \ell + n
\end{align*}
values $x_{s,i}$. The result follows.
\end{proof}

\section{Inverting The Truncated Fourier Transform}\label{Section::InvTFT}
Unfortunately, the \textsc{tft} cannot be inverted by merely doing another \textsc{tft} with $\w^{-1}$ and adjusting the output by some constant factor like inverse of the \textsc{dft}. Simply put: we are missing information and must account for this.

\begin{example}
Let $\RR = \ZZ / 13\ZZ$, $n=2^2=4$, with $\w = 5$ a $n$th primitive root of unity. Setting $A(x) = a_0+a_1x+a_2x^2$, the \textsc{tft} of $\a=(a_0,a_1,a_2)$ at 5 is
$$
\sbrac{\begin{array}{c}
A(\w^0) \\
A(\w^2) \\
A(\w^1)
\end{array}}
= 
\sbrac{\begin{array}{c}
A(1) \\
A(-1) \\
A(5)
\end{array}}
=
\sbrac{\begin{array}{c}
a_0 + a_1 + a_2 \\
a_0 - a_1 + a_2 \\
a_0 + 5a_1 - a_2
\end{array}}.
$$
Now, to show the \textsc{tft} of this with respect to $\w^{-1}$ is \emph{not} $\a$, define 
$$\mathbf{b}
= 
\sbrac{\begin{array}{c}
b_0 \\
b_1 \\
b_2
\end{array}}
=
\sbrac{\begin{array}{c}
a_0 + a_1 +a_2 \\
a_0-a_1+a_2 \\
a_0 + 5a_1 - a_2
\end{array}}.
$$
The \textsc{tft} of $\mathbf{b}$ with respect to $\w^{-1} = -5$ is
$$
\sbrac{\begin{array}{c}
B \brac{\w^0} \\
B \brac{ \w^{-2}} \\
B \brac{ \w^{-1}} 
\end{array}}=
\sbrac{\begin{array}{c}
B(1) \\
B(-1) \\
B(5)
\end{array}}=
\sbrac{\begin{array}{c}
b_0+b_1+b_2 \\
b_0-b_1+b_2 \\
b_0 -5b_1 -b_2
\end{array}}=
\sbrac{\begin{array}{c}
3a_0 + 5a_1 + a_2 \\
a_0 - 6a_1 - a_2 \\
-5a_0 + a_1 - 3a_2
\end{array}}
$$
which is \emph{not} a constant multiple of $\TFT_\w (\a)$.

This discrepancy is caused by the completion of $\mathbf{b}$ to $(b_0, b_1, b_2, 0)$ --- we should have instead completed $b$ to $(b_0, b_1, b_2,  A(-5) )$.
\end{example}

To invert the \textsc{tft} we 
use the fact that whenever two values among $$x_{s,im_s+j}, x_{s-1,im_s+j} \;\text{ and }\; x_{s,(i+1)m_s+j},x_{s-1,(i+1)m_s+j}$$ are known, that the other values can be deduced. That is, if two values of some butterfly are known then the other two values can be calculated using (\ref{EqButterfly}) as the relevant matrix is invertible. Moreover, these relations only involve shifting (multiplication and division by two), additions, subtractions and multiplications by roots of unity --- an ideal scenario for implementation.

As with \textsc{dft}, observe that $x_{p-k,0}, \ldots, x_{p-k,2^k-1}$ can be calculated from $x_{p,0}, \ldots, x_{p,2^k-1}$. This is because all the butterfly relations necessary to move up like this never require $x_{s,2^k+j}$ for any $s \in \{p-k, \ldots, p\}$ and $j>0$. This is illustrated in Figure \ref{boxes}. More generally, we have that
$$x_{p,2^j+2^k}, \ldots, x_{p,2^j + 2^k -1}$$
is sufficient information to compute
$$x_{p-k,2^j}, \ldots, x_{p-k,2^j +2^k -1 }$$ 
provided that $0<k\leq j < p$. (In Algorithm 1, that follows, we call this a ``self-contained push up''.)\\

\begin{figure}[h]
\begin{center}
\includegraphics[width=\textwidth]{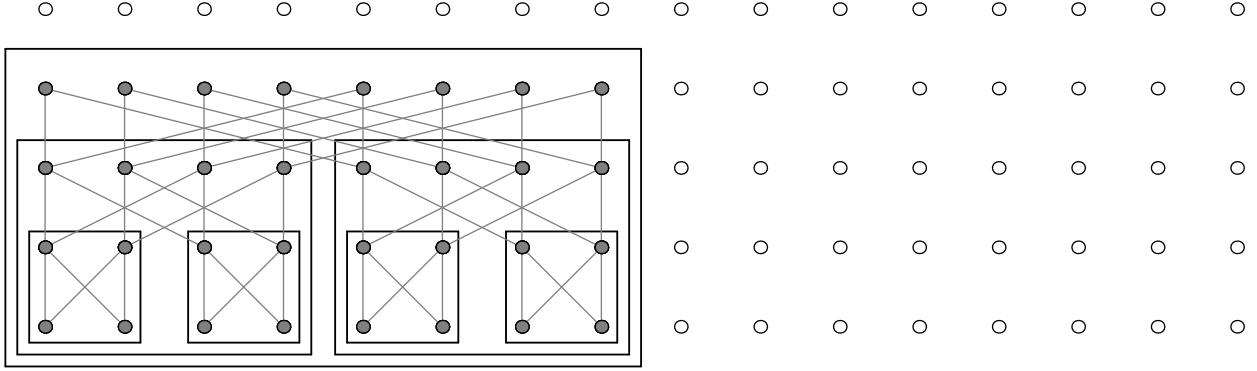}
\caption{The computations in the boxes are self contained.}
\label{boxes}
\end{center}
\end{figure}

\section{Inverse \textsc{tft} Algorithm}\label{Section::AlgTFT}

Finally, in this section, we present a simple recursive description of the inverse \textsc{tft} algorithm for the case we have restricted ourselves to (all zeroes packed at the end). The algorithm operates in a length $n$ array $\xx = \brac{\xx_0, \ldots, \xx_{n-1}}$ for which we assume access; here $n=2^p$ corresponds to $\omega$, a $n$th primitive root of unity. Initially, the content of the array is
$$\xx := \brac{x_{p,\,0}, \ldots, x_{p,\,\ell-1},\, 0, \ldots,0}$$
where $\brac{x_{p,\,0}, \ldots, x_{p,\,\ell-1}}$ is the result of the \textsc{tft} on $\brac{x_{0,\,0}, \ldots, x_{0,\,\ell-1},\,0,\ldots,0}$.

In keeping with our ``Illustrated'' description we use pictures, like Figure \ref{PicPic},
to indicate what values are known (solid dots $\bullet$) and what value to calculate (empty dot $\circ$). For instance, ``push down $\vec{x}_k$ with Figure \ref{PicPic}'', is shorthand for: use $\xx_k = x_{s-1,\,im_s+j}$ and $\xx_{k+m_s+j}= x_{s-1,\,(i+1)m_s+j}$ to determine $x_{s,\,im_s+j}$. We emphasize with an arrow that this new value should also overwrite the one at $\xx_k$. This calculation is easily accomplished using (\ref{EqButterfly}) with a caveat: the values $i$ and $j$ are not explicitly known. What \emph{is} known is $s$, and therefore $m_s$, and some array position $k$. Observe that $i$ is recovered by $i = k \text{ quo } m_s$ (the quotient of $k/m_s$).
\begin{figure}[htp]
  \begin{center}
    \includegraphics[width=0.7\textwidth]{tst.27}
  \end{center}
  \caption{Overwrite $\xx[im_s+j]$ with $\xx[im_s+j] + \w^{[i]_sm_s}\xx[(i+1)m_s+j]$.}
\label{PicPic}
\end{figure}

The full description of the inverse \textsc{tft} follows in Algorithm \ref{InvTFT}; note that the initial call is {\bf InvTFT}$\brac{0,\, \ell-1 ,\, n-1,\,1}$. A visual depiction of Algorithm \ref{InvTFT} is given in Figure \ref{BigInvTFT}. A sketch of a proof of its correctness follows.


\newcommand{\head}{\text{head}}
\newcommand{\tail}{\text{tail}}
\newcommand{\last}{\text{last}}
\newcommand{\LM}{\text{LeftMiddle}}
\newcommand{\RM}{\text{RightMiddle}}

\begin{algorithm}

\setlength{\parskip}{0.65ex}

\caption{\textbf{InvTFT}$(\head,\, \tail,\, \last,\, s)$}
\label{InvTFT}

\SetKw{Break}{break}\SetKw{AND}{and}\SetKwInOut{Input}{Initial call}\SetKwInOut{Output}{Output\,}
\Input{ InvTFT$\brac{0,\, \ell-1 ,\, n-1,\,1}$; }

\BlankLine
middle $\gets \dfrac{ \last - \head}{2} + \head;$\\[0.5em]

$\LM  \;\,\, \gets \lfloor {\rm middle} \rfloor$\;
$\RM \gets \LM + 1;$\\[1em]

\If {\rm $\head > \tail$} { 
	Base case---do nothing\;
	\Return null;
}

\ElseIf {\rm $\tail \geq \LM$} {
	Push up the self-contained region $\xx_\head$ to $\xx_\LM$\;
	
	\mbox{Push down $\xx_{\tail+1}$ to $\xx_{\last}$ with }\raisebox{-,9em}{\includegraphics[width=.06\textwidth]{tst.24}}~\;
	
	{\bf InvTFT}$\brac{\RM,\, \tail,\, \last,\, s+1}$\;
	$s \gets p - \log_2 \brac{\LM - \head + 1}$\;

	\mbox{Push up (in pairs) $\brac{\xx_\head,\, \xx_{\head+m_s}}$ to $\brac{\xx_\LM,\, \xx_{\LM+m_s}}$} 		\mbox{with }\raisebox{-.9em}{\includegraphics[width=.07\linewidth]{tst.25}}~\;

}
\ElseIf {\rm$\tail < \LM$} {

	\mbox{Push down $\xx_{\tail+1}$ to $\xx_{\LM}$ with }\raisebox{-.9em}{\includegraphics[width=.06\textwidth]{tst.26}}\;
	
	{\bf InvTFT}$\brac{\head,\, \tail,\, \LM,\, s+1}$\;
	
	\mbox{Push up $\xx_\head$ to $\xx_\LM$ with }\raisebox{-.9em}{\includegraphics[width=.06\textwidth]{tst.23}}\;

}

\end{algorithm}


\begin{theorem}
Algorithm \ref{InvTFT}, initially called with ${\rm \bf InvTFT}\brac{0,\, \ell-1 ,\, n-1,\,1}$ and given access to the zero-indexed length $n$ array
\begin{equation}\label{forward}
\xx = \brac{x_{p,\,0}, \ldots, x_{p,\,\ell-1},\, 0, \ldots,0}
\end{equation}
will terminate with
\begin{equation}\label{backward}
\xx = \brac{x_{0,\,0}, \ldots, x_{0,\,\ell-1},\, 0, \ldots,0}
\end{equation}
where (\ref{forward}) is the result of the \textsc{tft} on (\ref{backward}).
\end{theorem}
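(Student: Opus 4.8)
The plan is to prove correctness by induction on the size of the active segment, $\text{tail} - \text{head} + 1$, showing that each call \textbf{InvTFT}$(\head, \tail, \last, s)$ satisfies a precondition/postcondition contract that is strong enough to glue the recursive calls together. The contract should say roughly: on entry, the array positions $\xx_\head, \ldots, \xx_\last$ hold the values $x_{s-1,\,\head}, \ldots, x_{s-1,\,\tail}$ (the ``true'' stage-$(s-1)$ values we want to recover) in positions $\head$ through $\tail$, together with $x_{p,\,\tail+1}, \ldots, x_{p,\,\last}$ (the transformed values, or zeros, in the remaining positions), and that $\head$, $\last$ are aligned to a block of size $2^{p-s+1}$ so that $\LM$, $\RM$ split it into two halves of size $2^{p-s}$; on exit, positions $\head$ through $\last$ all hold the stage-$(s-1)$ values $x_{s-1,\,\head}, \ldots, x_{s-1,\,\last}$. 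Applying this with the initial call $(0,\ell-1,n-1,1)$ then yields exactly~(\ref{backward}) from~(\ref{forward}), since stage $s-1 = 0$ is the untransformed data and positions $\ell, \ldots, n-1$ are the (reconstructed) zeros.

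The inductive step splits into the three branches of the algorithm. In the base case $\head > \tail$ there is nothing to recover and the claim is vacuous. In the branch $\tail \geq \LM$ (the left half is fully ``solid''), I would argue: first, the self-contained push-up on $\xx_\head, \ldots, \xx_\LM$ recovers $x_{s-1,\,\head}, \ldots, x_{s-1,\,\LM}$ from $x_{p,\,\head}, \ldots, x_{p,\,\LM}$ using only butterflies internal to that block (this is exactly the ``self-contained push up'' justified before Figure~\ref{boxes}, with the index range $x_{p,2^j}, \ldots, x_{p,2^j+2^k-1} \mapsto x_{p-k,2^j}, \ldots, x_{p-k,2^j+2^k-1}$ applied to the current block); second, the push-down of $\xx_{\tail+1}, \ldots, \xx_\last$ one level, combined with the values just computed in the left half, supplies the stage-$s$ values needed in the right half so that the recursive call $(\RM, \tail, \last, s+1)$ meets its precondition; third, by the induction hypothesis that call returns $x_{s,\,\RM}, \ldots, x_{s,\,\last}$ in place; fourth, the final paired push-up combines the left-half values $x_{s-1,\,\head+j}$ (for $j$ in the half-block) with the newly returned right-half values $x_{s,\,\RM+j}$ via~(\ref{EqButterfly}) to produce $x_{s-1,\,\RM+j}, \ldots, x_{s-1,\,\last}$, completing the postcondition. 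The branch $\tail < \LM$ is simpler: the push-down of $\xx_{\tail+1}, \ldots, \xx_\LM$ produces the stage-$s$ values in the left half's tail, the recursive call $(\head, \tail, \LM, s+1)$ handles the left half, and the closing push-up on $\xx_\head, \ldots, \xx_\LM$ lifts back to stage $s-1$. Throughout, I would note that the invertibility of the $2\times 2$ butterfly matrix (already observed in the text, since $2$ is a unit and $\w^{[i]_sm_s}$ is a unit) is what makes every ``push up'' and ``push down'' well-defined, and that the index arithmetic $i = k \quo m_s$, $m_s = 2^{p-s}$ correctly identifies which butterfly is meant at array position $k$.

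The main obstacle, and where I would spend the most care, is bookkeeping the index alignment and the exact stage bound in the recursion --- verifying that $\LM - \head + 1$ and $\last - \RM + 1$ are indeed equal powers of two at every call (so the self-contained push-up and the paired push-up have the width they need), and that the reassignment $s \gets p - \log_2(\LM - \head + 1)$ before the final push-up names the correct stage. This is the step most prone to off-by-one errors, and it is essentially the same subtlety flagged in the footnote about the corrected bound $\lceil \ell/m_s\rceil m_s - 1$. A clean way to handle it is to carry, as part of the invariant, the explicit relations $\last - \head + 1 = 2^{p-s+1}$ and $\head \equiv 0 \pmod{2^{p-s+1}}$, and to check that each of the three branches preserves these relations for the sub-call (with $s+1$ in place of $s$ and the appropriate half-block). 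Once the invariant is pinned down precisely, the rest is a routine matching of the pictures in Algorithm~\ref{InvTFT} against the butterfly relation~(\ref{EqButterfly}), and termination is immediate since each recursive call strictly decreases $\last - \head$.
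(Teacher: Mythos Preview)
Your overall architecture---induction on the block size, a precondition/postcondition contract, case analysis on the two branches, appeal to the self-contained push-up of Figure~\ref{boxes}---is exactly the route the paper takes (the paper's own argument is only a picture-driven sketch, so you are in fact supplying more detail than the original). Your termination argument, via the strict decrease of $\last-\head$, is also cleaner than the paper's two-sequence $\alpha_i,\beta_i$ bookkeeping.

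There is, however, a genuine slip in your stated invariant: the precondition is \emph{swapped}. You write that on entry positions $\head,\ldots,\tail$ hold $x_{s-1,\head},\ldots,x_{s-1,\tail}$ and positions $\tail+1,\ldots,\last$ hold $x_{p,\tail+1},\ldots,x_{p,\last}$. Check this against the initial call $(\head,\tail,\last,s)=(0,\ell-1,n-1,1)$: the array then holds $x_{p,0},\ldots,x_{p,\ell-1}$ in positions $0,\ldots,\ell-1$ and the zeros $x_{0,\ell},\ldots,x_{0,n-1}$ in positions $\ell,\ldots,n-1$---the opposite of what you wrote. The correct contract (and the one the paper's sketch uses) is: on entry positions $\head,\ldots,\tail$ hold stage-$p$ values and positions $\tail+1,\ldots,\last$ hold stage-$(s-1)$ values; on exit all of $\head,\ldots,\last$ hold stage-$(s-1)$ values. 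Your own branch analysis actually uses this corrected version (e.g.\ you say the self-contained push-up ``recovers $x_{s-1,\head},\ldots,x_{s-1,\LM}$ from $x_{p,\head},\ldots,x_{p,\LM}$''), so the error appears to be confined to the statement of the invariant rather than the reasoning---but as written the contract does not match either the initial data or the steps you then carry out, and the induction would not close. Fix the precondition and the rest of your argument goes through.
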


\begin{proof}[Termination]
Let $\head_i$, $\tail_i$, and $\last_i$ be the values of head, tail, and last at the $i$th recursive call. Consider the {integer} sequences given by
\begin{align*}
\alpha_i &= \tail_i - \head_i \in \ZZ, \\ 
\beta_i &= \tail_i - \left\lfloor \dfrac{\last_i-\head_i}{2} + \head \right\rfloor \in \ZZ.
\end{align*}

If $\head_i > \tail_i$ then we have termination. Otherwise, either branch (7) executes giving
\begin{align*}
\head_{i+1} &= \floor{ \dfrac{\last_i - \head_i}{2} } + \head_i > \head_i\\
\tail_{i+1} &= \tail_i \\
\last_{i+1} &= \last_i
\end{align*}
and thus $\alpha_{i+1} < \alpha_i$, or branch (13) executes, giving
\begin{align*}
\head_{i+1} &= \head_i \\
\tail_{i+1} &= \tail_i \\
\last_{i+1} &= \floor{ \dfrac{\last_i - \head_i}{2} } + \head_i > \head_i .
\end{align*}
and thus $\beta_{i+1} < \beta_i$.

Neither branch can run forever since $\alpha<0$ causes termination and $\beta<0$ means either $\alpha$ strictly decreases or condition (13) fails, forcing termination.
\end{proof}

\begin{proof}[Sketch of correctness]
Figure \ref{MoreHalf} and Figure \ref{LessHalf} demonstrate that self contained regions can be exploited to obtain the  initial values required to complete the inversion. That is to say, for $n \in \{0,\, \ldots,\, p-1 \}$, that
$$
\xx = \brac{x_{p-n-1,\,0}, \ldots, x_{p-n-1,\,2^{n+1}-1} }
$$
can always be calculated from
$$
\xx = \brac{x_{p,\,0}, \ldots, x_{p,\,\ell-1} ,\, x_{p-n-1,\,{\rm tail}+1}, \ldots,\,  x_{p-n-1,\,2^{n+1}-1}}.
$$
\end{proof}

{\center
\begin{figure}[p]
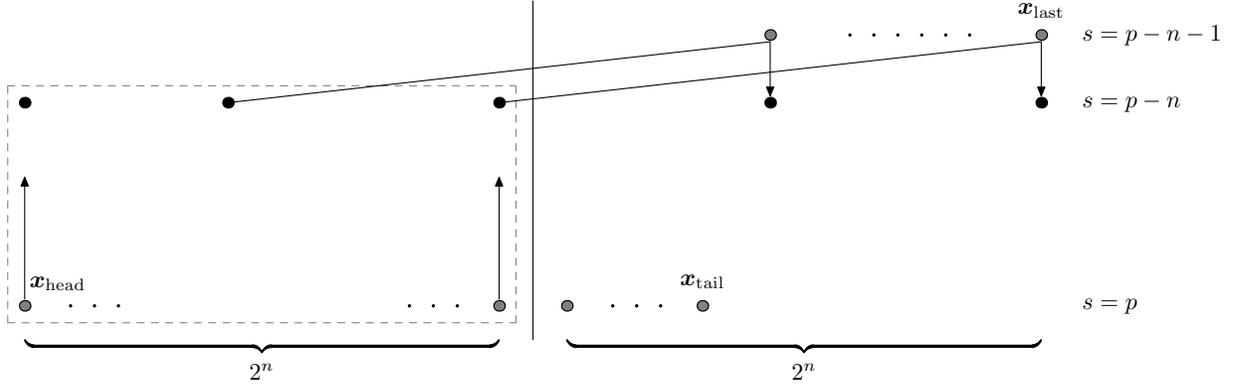

\caption{$\tail \geq {\rm LeftMiddle}$ (i.e. at least half the values are at $x=p$).}
\label{MoreHalf}
\subfigure[Line (8): push up the self contained (dashed) region. This yields values sufficient to push down at line (9). ]{
\includegraphics[width=.98\textwidth]{tst.28}
}
\subfigure[This enables us to make a recursive call on the dashed region (line (12)). By our induction hypothesis this brings all points at $s=p$ to $s=p-n$.]{
\includegraphics[width=.98\textwidth]{tst.29}
}
\subfigure[Sufficient points at $s=p-n$ are known to move to $s=p-n-1$ at line (13).]{
\includegraphics[width=.98\textwidth]{tst.30}
}
\end{figure}
}

{\center
\begin{figure}[p]
\begin{center}
\label{LessHalf}
\subfigure[Initially there is sufficient information to push down at line (14).]{
\includegraphics[width=.98\textwidth]{tst.31}
}
\subfigure[This enables us to make the prescribed recursive call at line (15). ]{
\includegraphics[width=.98\textwidth]{tst.32}
}
\subfigure[By the induction hypothesis this brings the values in the dashed region to $s=p-n$, leaving enough information to move up at line (16).]{
\includegraphics[width=.98\textwidth]{tst.33}
}
\end{center}
\caption{$\tail < {\rm LeftMiddle}$ (i.e. less than half the values are at $x=p$).}
\end{figure}
}



\begin{figure}[p]
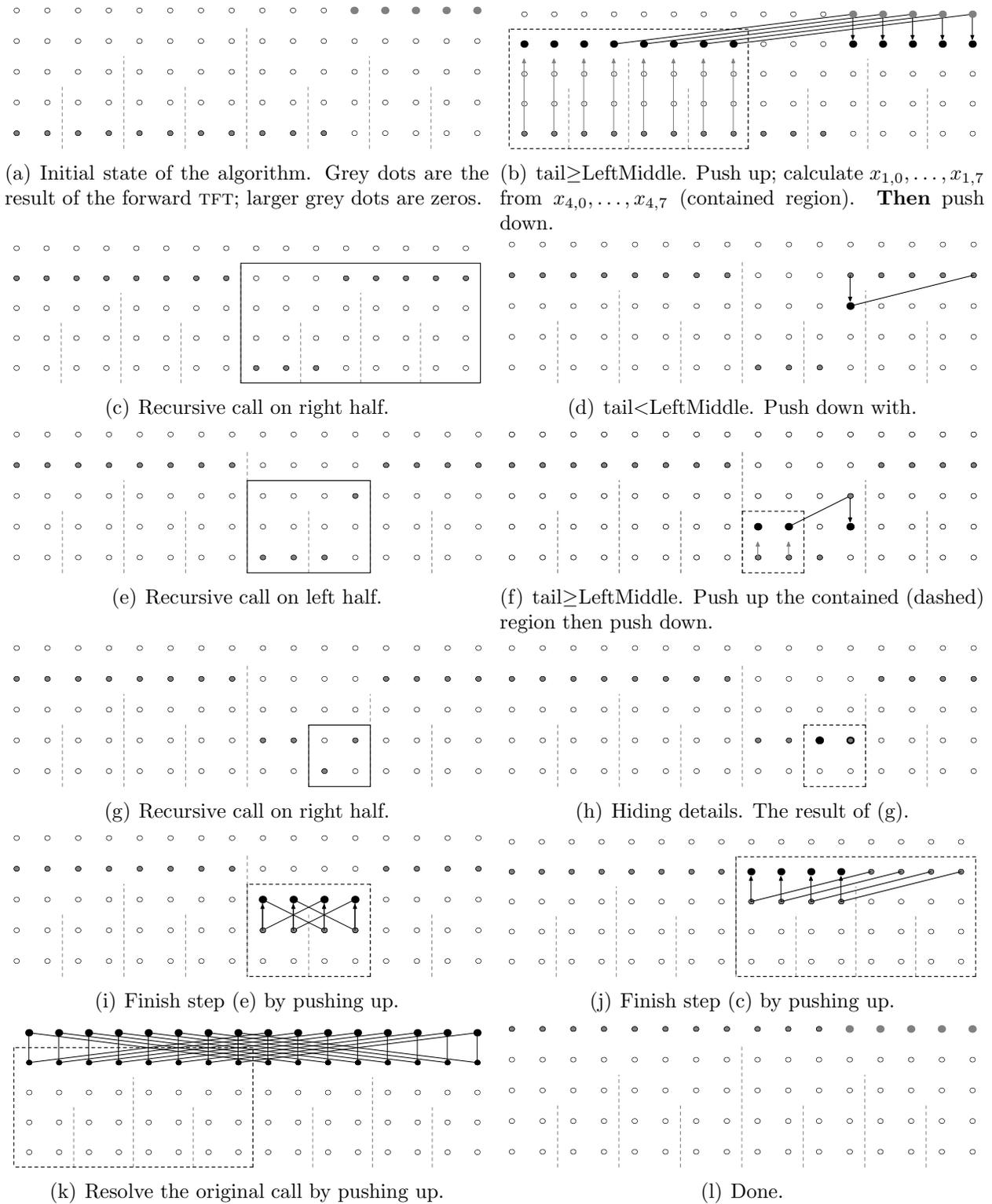

\subfigure[Initial state of the algorithm. Grey dots are the result of the forward \textsc{tft}; larger grey dots are zeros.]{
\includegraphics[width=.48\textwidth]{tst.10}
\label{fig:6a}
}~
\subfigure[tail$\geq$LeftMiddle. Push up; calculate $x_{1,0},\ldots,x_{1,7}$ from $x_{4,0},\ldots,x_{4,7}$ (contained region). \textbf{Then} push down.]{
\includegraphics[width=.48\textwidth]{tst.11}
\label{fig:6b}
}
\subfigure[Recursive call on right half.]{
\includegraphics[width=.48\textwidth]{tst.12}
\label{fig:6c}
}~
\subfigure[tail$<$LeftMiddle. Push down with.]{
\includegraphics[width=.48\textwidth]{tst.13}
\label{fig:6d}
}
\subfigure[Recursive call on left half.]{
\includegraphics[width=.48\textwidth]{tst.14}
\label{fig:6e}
}~
\subfigure[tail$\geq$LeftMiddle. Push up the contained (dashed) region then push down.]{
\includegraphics[width=.48\textwidth]{tst.15}
\label{fig:6f}
}
\subfigure[Recursive call on right half.]{
\includegraphics[width=.48\textwidth]{tst.16}
\label{fig:6g}
}~
\subfigure[Hiding details. The result of (g).]{
\includegraphics[width=.48\textwidth]{tst.18}
\label{fig:6h}
}
\subfigure[Finish step (e) by pushing up.]{
\includegraphics[width=.48\textwidth]{tst.19}
\label{fig:6i}
}~
\subfigure[Finish step (c) by pushing up.]{
\includegraphics[width=.48\textwidth]{tst.20}
\label{fig:6j}
}
\subfigure[Resolve the original call by pushing up.]{
\includegraphics[width=.48\textwidth]{tst.21}
\label{fig:6k}
}~
\subfigure[Done.]{
\includegraphics[width=.48\textwidth]{tst.22}
\label{fig:6l}
}
\caption[Optional caption for list of figures]{Schematic representation of the recursive computation of the inverse \textsc{tft} for $n=16$ and $\ell=11$. }
\label{BigInvTFT}
\end{figure}

\section{Conclusions}
The Truncated Fourier Transform is a novel and elegant way to reduce the number of computations of a \textsc{dft}-based computation by a possible factor of two (which may be significant). Additionally, with the advent of Harvey and Roche's paper \cite{InPlaceTFT}, it is possible to save as much space as computation. The hidden ``cost'' of working with the \textsc{tft} algorithm is the increased difficulty of determining the inverse \textsc{tft}. Although in most cases this is still less costly than the inverse \textsc{dft}, the algorithm is no doubt more difficult to implement.

\section*{Acknowledgements}
The author wishes to thank Dr.\ Dan Roche and Dr.\ \'Eric Schost for reading a draft of this paper and offering suggestions. 

\bibliographystyle{plain}
\bibliography{allbib}

\end{document}